 \newtheorem{thm}{Theorem}[section]
 \newtheorem{cor}[thm]{Corollary}
 \newtheorem{lem}[thm]{Lemma}
 \newtheorem{prop}[thm]{Proposition}
 \theoremstyle{definition}
 \newtheorem{defn}[thm]{Definition}
 \theoremstyle{remark}
 \newtheorem{rem}[thm]{Remark}
 \numberwithin{equation}{section}
\newcommand{\be}{\begin{equation}}
\newcommand{\ee}{\end{equation}}
\newcommand{\benon}{\begin{equation*}}
\newcommand{\eenon}{\end{equation*}}
\newcommand{\ba}{\begin{array}}
\newcommand{\ea}{\end{array}}
\newcommand{\bal}{\begin{align}}
\newcommand{\eal}{\end{align}}
\newcommand{\bea}{\begin{eqnarray}}
\newcommand{\eea}{\end{eqnarray}}
\newcommand{\bee}{\begin{eqnarray*}}
\newcommand{\eee}{\end{eqnarray*}}
\renewcommand{\P}{\mathbb P}
\newcommand{\Z}{\mathbb Z}
\newcommand{\Zd}{\mathbb Z^d}
\newcommand{\Rd}{{\mathbb R^d}}
\newcommand{\R}{\mathbb R}
\newcommand{\N}{\mathbb N}
\newcommand{\Prob}[1]{\mathbb P\left(#1\right)}
\newcommand{\norm}[1]{\Vert #1 \Vert}
\newcommand{\abs}[1]{\left| #1 \right|}
\newcommand{\Lp}[1]{\textrm{L}^2(#1)}
\renewcommand{\L}{{\Lambda}}
\newcommand{\angles}[1]{\langle #1 \rangle}
\numberwithin{equation}{section}
\newcommand{\non}{\nonumber}
\numberwithin{equation}{section}
\newcommand{\hm}[1]{\leavevmode{\marginpar{\tiny%
$\hbox to 0mm{\hspace*{-0.5mm}$\leftarrow$\hss}%
\vcenter{\vrule depth 0.1mm height 0.1mm width \the\marginparwidth}%
\hbox to 0mm{\hss$\rightarrow$\hspace*{-0.5mm}}$\\\relax\raggedright #1}}}
\newcommand{\dist}{\mathop{\mathrm{dist}}}
\newcommand{\supp}{{\mathop{\mathrm{supp\,}}}}
\newcommand{\LL}{{\Lambda_L}}
\renewcommand{\L}{\Lambda}
\newcommand{\Hom}{H_\omega}
\newcommand{\Vom}{V_\omega}
\newcommand{\om}{\omega}
\newcommand{\ran}{\textrm{Ran }}
\begin{document}

\title[The Anderson model with missing sites]{The Anderson model with missing sites}

\author{Constanza Rojas-Molina}
\address{Universit\'e de Cergy-Pontoise, UMR CNRS 8088, F-95000 Cergy-Pontoise, France and CAMTP - Center for Applied Mathematics and Theoretical Physics, University of Maribor, Krekova 2, SI-2000 Maribor,
Slovenia.}
%\address{CAMTP - Center for Applied Mathematics and Theoretical
%Physics, University of Maribor, Krekova 2, SI-2000 Maribor,
%Slovenia.}
\email{crojasm@u-cergy.fr}

\keywords{random Schr\"odinger operators, Wegner estimate, Delone-Anderson operators,
	dynamical localization, Delone dynamical systems}

%\thanks{{\today}}

\begin{abstract}
In the present note we show dynamical localization for an Anderson model with missing sites in a discrete setting at the bottom of the spectrum in arbitrary dimension $d$. In this model, the random potential is defined on a relatively dense subset of $\Zd$, not necessarily periodic, i.e., a Delone set in $\Z^d$. To work in the lower band edge we need no further assumption on the geometric complexity of the Delone set. We use a spatial averaging argument by Bourgain-Kenig to obtain a uniform Wegner estimate and an initial length scale estimate, which yields localization through the Multiscale Analysis for non ergodic models. This argument gives an explicit dependence on the maximal distance parameter of the Delone set for the Wegner estimate. We discuss the case of the upper spectral band edge and the arising need of imposing the (complexity) condition of strict uniform pattern frequency on the Delone set.
\end{abstract}

\maketitle

\section{Introduction and main result}

Consider the operator
\be\label{formula-ranop} \Hom=-\Delta+ \Vom \quad \mbox{ on }\,\,\ell^2(\Zd)\ee
where $d\geq 1$ is the dimension,
\be\label{formula-ranpot}
   \Vom(n)=\left\{ \begin{array}{ll}
\omega(n) & \textrm{if $n\in D$}\\
0 & \textrm{otherwise}.\\
\end{array} \right.
\ee
where $(\omega(n))_{n\in D}$ are i.i.d. random variables with regular, compactly supported probability distribution $\mu$ such that $\supp \mu\subset [0,M]$, for $M>0$. We denote the probability space by $(\Omega, \P)$, where $\Omega=[0,M]^D$. Here $D$ is a subset of $\Zd$, not necessarily periodic, and relatively dense. This means there exists a constant $R\geq 1$, such that every closed cube in $\Zd$ of side length $R$ in the max-norm of $\Zd$ contains at least one point of $D$. We call such $D$ an $R$-\emph{Delone set} in $\Zd$ (note that the usual uniform discreteness property used to define Delone sets is trivially satisfied in $\Zd$). The operator $\Hom$ with the potential $\Vom$ is called the Anderson model with missing sites and fits in the more general framework of Delone--Anderson operators \cite{RM12, GMRM, BdMNSS06,RMV12, Kl12}.

We say that $H_\omega$ exhibits \emph{dynamical localization} in an interval $I$, $I\cap\sigma(H_\omega)\neq \emptyset$, if for any $\psi\in \ell^2(\Zd)$ we have, with probability one:

\be\label{def-DL} \sup_{t\in\R} \norm{\angles{X}^{p/2} e^{-itH_\omega}P_\omega(I)\psi} <\infty,\quad \mbox{for all } p\geq 0, \ee
where $P_\omega(I)$ denotes the spectral projection of $H_\omega$ associated to the interval $I$. Note that the norm in \eqref{def-DL} is a measure of the spreading of the wave $\psi$ in space as a function of time. Therefore, dynamical localization means absence of diffusion, in the sense that the particle stays localized in space uniformly for all times. For a thorough exposition of the subject, see \cite{K07,Kl08,S}. The method we use to prove localization, the Multiscale Analysis, actually proves stronger notions of dynamical localization, see the discussion in \cite[Section 3]{Kl08}.

For standard Anderson models, i.e. with $D=\Zd$, dynamical localization has been proved at the bottom of the spectrum in arbitrary dimension.
 The Anderson model with missing sites, on the other hand, presents the difficulty of having gaps in the potential, so there is no covering condition. In the continuum, the problem of obtaining dynamical localization and proving Wegner estimates without any covering condition was treated in \cite{CHK03, KV02a,RMV12,Kl12}, among others.

 More general models, but still associated to periodic settings in $\Zd$ were considered in \cite{CaoE11} in the Lifshitz tails regime at the bottom of the spectrum. This allowed the authors to prove dynamical localization using the Fractional Moment Method \cite{AM93}. However, in the case where $D$ is non periodic, $\Hom$ is a non ergodic Hamiltonian and therefore the Integrated Density of States (IDS) cannot be proven to exist as a non random object with the standard techniques. This problem was addressed for Delone--Anderson models on graphs in \cite[Section 4]{MR07} and in the continous setting in \cite{GMRM}, showing that under certain conditions on the geometric complexity of the Delone set, as finite local complexity and strict uniform pattern frequency, the IDS exists, is non random and it gives information about the almost sure spectrum of $H_\omega$.
The IDS for Delone operators, that is, purely aperiodic settings, has been thoroughly studied in the literature, see e.g. \cite{LS03, LS05, LS06, LV09}.

For a study of dynamical localization in non ergodic models on $\Lp{\Rd}$ and in particular, Delone--Anderson models, see \cite{RM12}. Previous results for the latter model were obtained in \cite{BdMNSS06} using the Fractional Moment Method.  In \cite{RM12} a generalization of the Bootstrap multiscale analysis \cite{GK01} was obtained for continuous non ergodic models, which holds in the discrete setting, under the condition that all probabilistic estimates involved are uniform with respect to translations in space. Therefore, to prove that $\Hom$ exhibits dynamical localization at the spectral band edges, it is enough to show that a uniform Wegner estimate and a uniform initial length scale estimate hold for energies in that range.

The main purpose of this note is to show dynamical localization for $H_\omega$ at the lower band edge, using an averaging argument by Bourgain-Kenig \cite{BoK05} (see also \cite{CHK07,G08,GMRM}). We also show that a geometric assumption on the Delone set $D$ is enough to establish the existence of almost sure spectrum around the upper band edge, or, equivalently, at the bottom of the spectrum in the case of a non trivial bounded background potential. To study the bottom of the spectrum we need no assumption on the existence of the IDS nor we need information on its asymptotic behavior. As a consequence, we do not need to impose any condition on the geometric complexity of $D$. For the upper band edge, however, one needs to impose that $D$ satisfies the property of strict positive uniform pattern frequency (SUPF). Note that since $D$ is embedded in $\Zd$, it is trivially of finite local complexity (FLC) (see \cite[Section 2.3]{MR12} for a complete definition).
These geometric conditions are needed to ensure that there is almost surely spectrum in the region where one can prove localization. Once $D$ satisfies the property SUPF, and since it is of FLC, results from \cite{MR07,GMRM} show that the IDS exists, is non random and the spectrum of $H_\omega$ is given by the support of the density of states measure associated to the IDS. Then, there exists a set $\Sigma\subset \R$, such that $\Sigma=\sigma(H_\omega)$ for almost every $\omega\in\Omega$.

Let us recall that for $u\in \ell^2(\Zd) $,
\be H_0 u(n)= -\Delta u(n) = - \displaystyle\sum_{|m-n|_1=1}\left(u(m)-u(n)\right).  \ee
Using a Fourier transform, it is easy to see that $\sigma(H_0)=[0,4d]$, where $\sigma(A)$ denotes the spectrum of an operator $A$.  By assuming $0\in \supp \mu$, one can use a Borel-Cantelli type argument (see \cite[Section 6.4]{RM12}) to show that
\be [0,4d] \subset \sigma(H_\omega)\subset [0,4d+M], \quad \mbox{almost surely}\ee
The lower band edge of $\sigma(H_\omega)$ corresponds to $E_0=0$, while the upper band edge corresponds to some value $E_*=\sup \Sigma\in [4d, 4d+M]$. In Section \ref{s:upper} we show that, although we do not know if there are spectral gaps in the spectrum beyond $4d$, the energy $E_*$ is not an isolated spectral value of $\sigma(H_\omega)$, almost surely.

Our main result is the following (\cite{RMthesis}):

\begin{thm}\label{thm-main} Let $D$ be an arbitrary Delone set and assume $0\in\supp\mu$. For any $d\geq 1$, $H_\omega$ defined in \eqref{formula-ranop} exhibits dynamical localization at the bottom of the spectrum.
\end{thm}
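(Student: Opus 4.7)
The plan is to apply the non-ergodic version of the Bootstrap Multiscale Analysis developed in \cite{RM12}, which is also valid in the discrete setting. This reduces the problem to proving two probabilistic estimates for the finite-volume Hamiltonian $H_{\omega,\Lambda_L(x)}$, uniformly in the center $x\in\Zd$: a Wegner estimate, and an initial length scale estimate (ILSE), both for energies in a small interval $I=[0,\epsilon]$ at the lower band edge.

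First, I would establish the uniform Wegner estimate via the Bourgain--Kenig spatial averaging method \cite{BoK05}, adapted to the discrete Delone setting as in \cite{GMRM}. The $R$-Delone property guarantees that the set $D\cap\Lambda_L(x)$ has cardinality at least of order $(L/R)^d$ uniformly in $x\in\Zd$, and that its points are $R$-dense in $\Lambda_L(x)$. Combined with the positivity of $V_\omega$ and the regularity of $\mu$, the averaging argument then produces a bound of the form
\be
\P\bigl(\dist(\sigma(H_{\omega,\Lambda_L(x)}),E)\leq \eta\bigr)\leq C(R,\mu,d)\,\eta\,L^{d},
\ee
valid for all $E$ in a compact neighborhood of $0$, in which the geometry of $D$ enters only through the parameter $R$.

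Next, I would prove the uniform ILSE by combining the positivity of $V_\omega$ with the $R$-density of $D$. The Dirichlet eigenvalues of $-\Delta$ on $\Lambda_L(x)$ are at least of order $L^{-2}$, and the associated low-energy eigenfunctions are essentially spread out over the whole cube, so their $\ell^2$ mass on $D\cap\Lambda_L(x)$ is bounded below by a constant depending only on $R$ and $d$. Using $0\in\supp\mu$ together with independence, one shows that with probability at least $1-L^{-p}$, for any prescribed $p>0$, the ground state energy of $H_{\omega,\Lambda_L(x)}$ stays above some $L^{-\xi}$ with $\xi<2$, uniformly in $x$. A Combes--Thomas estimate at the edge then converts this spectral gap information into the required exponential decay of the finite-volume Green's function at energies in $I$.

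The main obstacle is ensuring that the constants in both estimates are genuinely independent of the translation $x$, since $D$ is not periodic and $H_\omega$ is not ergodic. The $R$-Delone hypothesis is tailored to this purpose: the only feature of $D$ entering the Bourgain--Kenig averaging and the low-energy eigenfunction analysis is $R$, which makes both the Wegner and ILSE inputs uniform in $x\in\Zd$. With these two estimates in hand, the non-ergodic MSA of \cite{RM12} delivers dynamical localization on $I$, yielding Theorem \ref{thm-main}.
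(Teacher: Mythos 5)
Your framework (non-ergodic Bootstrap MSA, uniform Wegner estimate plus uniform ILSE, Combes--Thomas at the end) matches the paper exactly, and your sketch of the Wegner estimate via Bourgain--Kenig spatial averaging is the paper's Lemma~\ref{lem-WE}: average the Delone characteristic potential over a cube of side $\sim R$ to recover a covering condition, then bound $\langle (W_L-V_L)\varphi,\varphi\rangle$ by a multiple of $\sqrt{\langle H_0\varphi,\varphi\rangle}$ using a telescoping path argument. That part is fine.

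The ILSE is where there is a genuine gap. You frame it as a Lifshitz-tails-like argument: Dirichlet gap $\gtrsim L^{-2}$ for $-\Delta$, plus a claim that ``using $0\in\supp\mu$ together with independence'' gives, with probability $1-L^{-p}$, a ground state energy $\gtrsim L^{-\xi}$ uniformly in $x$. As written, this probabilistic step does not work and points in the wrong direction. The hypothesis $0\in\supp\mu$ says that arbitrarily small couplings are \emph{possible}; it cannot by itself supply a lower bound on $V_\omega$ with high probability, and combining it with independence only makes it easier, not harder, to find regions where the potential is tiny. What the paper actually does (Proposition~2.3, following \cite{G08}) is apply the \emph{same} spatial averaging directly to the random potential $V_{\omega,L}$ over a cube of side $\sim RK$, and then invoke a large-deviations/concentration bound on the empirical averages $\frac{1}{K^d}\sum_{\gamma\in\Lambda_{K/3}(j)\cap D}\omega(\gamma)$, which uses that the mean $\bar\mu>0$ (i.e. $\mu\neq\delta_0$), not that $0\in\supp\mu$. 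Optimizing $K\sim(\log L)^{1/d}$ yields a spectral gap of order $(\log L)^{-2/d}$ with probability $\geq 1-L^{-pd}$, which is in fact far stronger than any $L^{-\xi}$ and requires no reference to the Dirichlet gap $L^{-2}$ of $-\Delta$ at all. The role of $0\in\supp\mu$ in Theorem~\ref{thm-main} is entirely different: together with a Borel--Cantelli argument it guarantees $[0,4d]\subset\sigma(H_\omega)$ a.s., so that the localization interval actually intersects the spectrum. You should separate these two uses, and replace the ``$0\in\supp\mu$ plus independence'' step with the averaged-potential-plus-large-deviations argument; otherwise the ILSE is unsubstantiated.

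One smaller remark: the assertion that low-energy eigenfunctions of $-\Delta$ have $\ell^2$ mass on $D\cap\Lambda_L$ bounded below by a constant depending only on $R,d$ is not folklore; it \emph{is} the positivity estimate proved in Lemma~\ref{lem-WE} via spatial averaging. You may reuse it for the ILSE, but you should present it as the key technical lemma rather than as a known fact.
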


Our proof relies on a spatial averaging argument by Bourgain-Kenig \cite[Section 4]{BoK05}, see also \cite{GHK07,G08,GMRM}. It consists in using an averaged version of the original potential, that proves to be a good approximation at the bottom of the spectrum of $-\Delta$. The case of the upper band edge is equivalent to study the bottom of the spectrum of an operator $\tilde H_\om=\tilde H_0+\tilde V_\omega$, where $\tilde H_0$ is now a perturbation of the Laplacian by a potential supported in $\Z^d\setminus D$. In this case the averaging argument no longer works.
\begin{rem}
As this note was being written, the author learned about Elgart and Klein's recent work \cite{EKl13}. There, they use another approach to prove dynamical localization at spectral band edges for $\tilde H_\om$ without averaging of the potential, that allows them to treat the case of a non trivial background potential.
%They also obtain results on the almost-sure The SUPF property (see Section \ref{s:upper}) implies condition \cite[Eq. 1.13]{EKl13}.
 \end{rem}
In our proof, the averaging argument gives concentration estimates on the eigenfunctions of finite volume versions of $H_\omega$. Basically, we prove that a Delone potential accounts for lifting the spectrum of $H_0$ by a certain constant that depends only on the dimension, the single-site potential, the support of the random variables and the parameter $R$ of the Delone set $D$ (in \cite{EKl13}, this is proven through a different argument). These inequalities, in turn, give both the optimal uniform Wegner estimates following \cite{CHK07}, and an estimate on the spectral gap generated at the bottom of the spectrum by the finite volume restrictions. The initial length scale estimates are then a consequence of the existence of such a gap and Combes-Thomas estimates. This is done for the lower band edge in Section \ref{s:lower} and does not need a Lifshitz tails behavior. To our knowledge, so far Lifshitz tails have been obtained at the bottom of the spectrum only for the case where the unperturbed Hamiltonian is $-\Delta$ \cite{MR07,GMRM}. In the upper band edge, or in the case of a non zero background potential, it is necessary to impose a disorder condition on the probability measure of the random variables, which in a certain way replaces the Lifshitz tails estimates. Namely, one needs to impose the following condition: for some $\alpha>0$ and $\tau>d/2$
\be\label{assumption-mu} \tilde\mu((0,t])\leq \alpha t^\tau \quad \mbox{for } t>0 \mbox{ small}.\ee
where $\tilde\mu$ is the probability measure associated to the random variables $\tilde \omega(n)=M-\omega(n)$.

 In Section \ref{s:upper} we comment on the upper band edge and the geometric property needed on the Delone set $D$ to ensure that there exists spectrum around the band edge almost surely.

Using for the upper band edge of $H_\omega$ the results obtained by \cite{EKl13} for more general settings together with Theorem \ref{thm-main} one can conclude the following:
\begin{cor}\label{cor}
Let $D$ be a Delone set satisfying the property of strict positive uniform pattern frequency. Assume $\{0,M\}\in\supp\mu$ and let \eqref{assumption-mu} hold. For any $d\geq 1$, $H_\omega$ defined in \eqref{formula-ranop} exhibits dynamical localization at the spectral band edges.
\end{cor}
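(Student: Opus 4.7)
The plan is to combine Theorem \ref{thm-main} with a spectral reflection and then invoke the Elgart--Klein result \cite{EKl13}. For the lower band edge there is nothing to add: Theorem \ref{thm-main} requires only $0\in\supp\mu$ and gives dynamical localization at $E_0=0$ for any Delone set, so the SUPF hypothesis and the condition \eqref{assumption-mu} play no role at this edge.

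For the upper band edge, I would first perform the checkerboard reflection. The unitary $U\psi(n)=(-1)^{|n|_1}\psi(n)$ satisfies $U(-\Delta)U^*=4d\,I-(-\Delta)$, and writing $\tilde\omega(n)=M-\omega(n)$ one computes
\[
(4d+M)I-UH_\omega U^*=-\Delta+M\chi_{\Zd\setminus D}+\tilde\omega\,\chi_D=:\tilde H_0+\tilde V_\omega.
\]
Here $\tilde H_0=-\Delta+M\chi_{\Zd\setminus D}$ is a bounded deterministic background supported off $D$, and $\tilde V_\omega$ is a Delone--Anderson potential on $D$ with i.i.d. variables of law $\tilde\mu$. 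The energy reflection $E\mapsto (4d+M)-E$ identifies the upper edge $E_*$ of $\sigma(H_\omega)$ with the lower edge of $\sigma(\tilde H_\omega)$, so it suffices to prove dynamical localization for $\tilde H_\omega:=\tilde H_0+\tilde V_\omega$ at the bottom of its spectrum.

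Two inputs then finish the argument. On one hand, because $D\subset\Zd$ is automatically of finite local complexity, the hypothesis SUPF lets one invoke the Delone--Anderson IDS results of \cite{MR07, GMRM}: $\tilde H_\omega$ admits a non-random almost sure spectrum $\tilde\Sigma$ coinciding with the support of the density of states measure, and the hypothesis $M\in\supp\mu$ (equivalently $0\in\supp\tilde\mu$) places $\inf\tilde\Sigma$ in $\tilde\Sigma$ so that the edge is a genuine spectral value and not isolated. On the other hand, the averaging scheme used in Section \ref{s:lower} is no longer available because of the non-trivial background $M\chi_{\Zd\setminus D}$, and this is where the disorder condition \eqref{assumption-mu} enters: it supplies the probabilistic smallness that in the background-free case of Theorem \ref{thm-main} was produced by the Bourgain--Kenig averaging together with Lifshitz-type behavior. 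With SUPF and \eqref{assumption-mu} in hand, the framework of \cite{EKl13}, designed precisely to handle Delone--Anderson operators with a bounded background potential, yields dynamical localization for $\tilde H_\omega$ at its lower edge. Pulling this back through $U$ and through $E\mapsto(4d+M)-E$ delivers dynamical localization for $H_\omega$ at $E_*$.

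The only non-formal obstacle is the last step: one must verify that $\tilde H_\omega$ fits cleanly into the class of operators treated in \cite{EKl13}, i.e. that their non-random spectrum identification and their multiscale initial length scale estimate apply to this specific pair $(\tilde H_0,\tilde V_\omega)$ under the hypotheses imposed on $\mu$ and $D$. Everything else, the unitary equivalence, the energy reflection, and the SUPF-based identification of the spectrum, is formal given the preceding sections.
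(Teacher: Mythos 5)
Your proposal is correct and matches the paper's route: lower edge via Theorem~\ref{thm-main}, upper edge by reflecting to $\tilde H_\omega=\tilde H_0+\tilde V_\omega$, using SUPF (together with $0\in\supp\tilde\mu$) to place the edge $E_*$ in the almost sure essential spectrum, and then invoking the Elgart--Klein framework for operators with a bounded Delone background under condition~\eqref{assumption-mu}. The one presentational difference is that the paper simply writes $\tilde H_\omega=-H_\omega+4d+M$ with free part $\Delta+4d$ and relies on the (implicit) unitary equivalence to $-\Delta$, whereas you make that equivalence explicit via the checkerboard unitary $U\psi(n)=(-1)^{|n|_1}\psi(n)$, which is a slightly cleaner way to put $\tilde H_\omega$ directly into the $-\Delta+\text{background}+\text{Delone--Anderson}$ form that \cite{EKl13} treats; the substance of the ``edge not isolated'' step, which you only gesture at, is carried in the paper by Propositions~\ref{prop-H0} and~\ref{prop} combined with the finite-multiplicity consequence of localization in \cite{EKl13}.
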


\bigskip

We denote by $\L_L(x)=[-L+x,L+x]^d\subset\Z^d$ the cube of side length $2L+1\in \N$ centered in $x\in\Z^d$. We denote by $H_{x,L}$ the restriction of an operator $H$ to a cube $\L_L(x)$ and omit the center $x$ from the notation when results are uniform in $x$. For a Borel set $I\subset \R$ we write $P(I)=\chi_I(H)$ for the spectral projection of $H$ associated to $I$. In particular, we use the notation $P_{0,x,L}(I)=\chi_I(H_{0,x,L})$ for the spectral projection of $H_0$ restricted to the cube $\L_L(x)$ . For $n=(n_1,...,n_d)\in\Zd$ we write $|n|_\infty:=\max_{1\leq i \leq d}\abs{n_i}$ for the max-norm and $|n|_1:=\sum_{i=1,...d}\abs{n_i}$ for the graph norm in $\ell^2(\Zd)$. We denote by $\norm{\cdot}$ the $\ell^2$-norm, and by $\ell_c(\Zd)$ the compactly supported functions on $\Zd$.

\section{Dynamical localization at the lower band edge}\label{s:lower}

\subsection{The Wegner Estimate}\label{section-WE}
The following is the uniform Wegner estimate needed to apply the Bootstrap Multiscale Analysis method for non ergodic models.
\begin{thm}\label{thm-WE} There exists an energy $E_W$ such that for any compact interval $\mathcal I\subset I=[0,E_W]$, there exists a finite constant $Q_W=Q_W(\mathcal I,\mu,R)$ such that for every $E\in \mathcal I$,
\be \Prob{\dist(\sigma(H_{\omega,x,L}), E)\leq\eta}\leq Q_W \eta L^d \ee
for $\eta\in(0,1]$ and $L\in\N$, uniformly with respect to $x\in\Zd$.
\end{thm}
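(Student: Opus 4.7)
The plan is to combine a Bourgain--Kenig spatial averaging with the Combes--Hislop--Klopp (CHK) eigenvalue averaging scheme. The CHK machinery reduces the Wegner estimate to a uniform concentration bound of the form
\be\label{plan-conc}
\sum_{n\in D\cap \L_L(x)} |\varphi(n)|^2 \geq c_{d,R}\, \norm{\varphi}^2
\ee
for every normalized eigenfunction $\varphi$ of $H_{\omega,x,L}$ with eigenvalue $E\in[0,E_W]$, where $E_W=E_W(d,R)>0$. Since $\partial_{\omega(n)} H_{\omega,x,L}$ is the rank-one positive operator $|\delta_n\rangle\langle\delta_n|$, the Hellmann--Feynman formula rewrites the left-hand side of \eqref{plan-conc} as $\sum_{n\in D\cap\L_L(x)} \partial_{\omega(n)} E_j(\omega)$, and once this sum is bounded below uniformly the Stollmann-type spectral averaging of \cite{CHK07} yields the target estimate $\P(\dist(\sigma(H_{\omega,x,L}),E)\leq \eta) \leq Q_W \eta L^d$.

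The substantive step is therefore \eqref{plan-conc}. The Bourgain--Kenig observation is that low-energy eigenfunctions are essentially constant on cubes of side $R$. Partition $\L_L(x)$ into translates $\{Q_k\}$ of an $R$-cube (with a boundary layer treated separately) and, using $R$-Delonicity, pick one $n_k\in D\cap Q_k$. A telescoping identity along a path of nearest neighbours from $n_k$ to $m\in Q_k$, combined with Cauchy--Schwarz, gives
\benon
|\varphi(m)|^2 \leq 2|\varphi(n_k)|^2 + 2 d R\!\!\sum_{\substack{m',m''\in Q_k\\ |m'-m''|_1=1}}\!\!|\varphi(m')-\varphi(m'')|^2.
\eenon
Summing first over $m\in Q_k$, then over $k$, and using the quadratic-form inequality $\angles{\varphi,-\Delta_{x,L}\varphi}\leq \angles{\varphi,H_{\omega,x,L}\varphi} = E\leq E_W$ (valid because $\Vom\geq 0$) yields
\benon
\norm{\varphi}^2 \leq 2R^d \sum_{n\in D\cap\L_L(x)} |\varphi(n)|^2 + C(d)\,R^{d+1}\,E_W\,\norm{\varphi}^2.
\eenon
Choosing $E_W = E_W(d,R) \sim R^{-(d+1)}$ small enough absorbs the last term and produces \eqref{plan-conc} with $c_{d,R}\sim R^{-d}$.

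To complete the CHK step, one freezes all $\omega(n)$ outside $D\cap\L_L(x)$, parametrizes the remaining variables along the direction of their uniform average, and applies Stollmann's lemma, using the regularity of $\mu$ to control the density. This produces
\benon
\EE\,\tr\chi_{[E-\eta,E+\eta]}(H_{\omega,x,L}) \leq c_{d,R}^{-1}\norm{\mu}_\infty\,\eta\,\EE\,\tr\chi_{\mathcal I^+}(H_{\omega,x,L}),
\eenon
with $\mathcal I^+$ a mild enlargement of $\mathcal I$. The trivial Weyl bound $\tr\chi_{\R}(H_{\omega,x,L}) = (2L+1)^d$ gives the factor $L^d$, and Chebyshev's inequality converts this trace bound into the desired probability estimate.

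The main obstacle is \eqref{plan-conc}, and in particular keeping the $R$-dependence explicit: the combinatorial constants in the telescoping step and the boundary layer around $\partial\L_L(x)$ (where the partition into $R$-cubes cannot be exact) both enter the final constant $Q_W$ through $c_{d,R}$ and $E_W(d,R)$. No \emph{Lifshitz tails} input is needed here, which is precisely what allows the bound to hold for an arbitrary Delone set without any complexity assumption beyond relative denseness.
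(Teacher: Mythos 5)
Your proposal is correct, but it differs from the paper's argument in a way worth noting. The paper reduces Theorem~\ref{thm-WE} to a positivity estimate on the \emph{free} operator's low-energy subspace (Lemma~\ref{lem-WE}: $\angles{V_{x,L}\varphi,\varphi}\geq C\norm{\varphi}^2$ for $\varphi\in\ran P_{0,x,L}([0,E_W])$), which is then fed directly into \cite[Theorem~2.1]{CHK07} and the full CHK machinery. Your plan instead proves a concentration estimate for eigenfunctions of the \emph{random} operator $H_{\omega,x,L}$ itself and then uses Hellmann--Feynman monotonicity $\partial_{\omega(n)}E_j=|\varphi_j(n)|^2$ with a Wegner/Stollmann spectral averaging along the direction $\sum_n\partial_{\omega(n)}$. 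Both routes work because $V_\omega\geq 0$ gives $\angles{H_0\varphi,\varphi}\leq E$ in either framing. Beyond the organizational difference, your derivation of the concentration bound also differs: the paper averages $V_{x,L}$ over translates in $\L_{2R}(0)$ and compares $\angles{(W_L-V_L)\varphi,\varphi}$ against $\angles{W_L\varphi,\varphi}$, obtaining the threshold $E_W\sim \left((4R+1)^d R\right)^{-2}\sim R^{-2(d+1)}$; your partition into disjoint $R$-cubes with a local telescoping and Cauchy--Schwarz directly on $|\varphi(m)|^2$ yields the sharper $E_W\sim R^{-(d+1)}$ and $c_{d,R}\sim R^{-d}$. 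So your argument is slightly more efficient in the dependence on the Delone parameter $R$. One cosmetic remark: the monotonicity-plus-spectral-averaging step you invoke is closer in spirit to the classical Wegner argument and Stollmann's lemma than to the resolvent-based spectral averaging actually used in \cite{CHK07}, and to reach the optimal $L^d$ bound (rather than $L^{2d}$) one should make sure the averaging is carried out at the level of the trace $\EE\tr\chi_{[E-\eta,E+\eta]}(H_{\omega,x,L})$, exactly as you sketch, rather than bounding the probability of a single eigenvalue crossing separately.
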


Let
\be\label{maximal-pot} V_{x,L}(n)=\displaystyle \sum_{\gamma\in D\cap \L_L(x)}\delta_\gamma(n), \ee
where $\delta_\gamma(n)$ is the Kronecker delta.
The proof Theorem \ref{thm-WE} follows from the proof of \cite[Theorem 1.1]{CHK07}, which holds in the discrete setting and relies on the positivity estimate \cite[Teorem 2.1]{CHK07}. The proof of Theorem \ref{thm-WE} is therefore a consequence of the following,

\begin{lem} \label{lem-WE}
Let $L\in\N$, $L>R$. There exists an energy $E_W\in\R^+$ and a positive constant $C$ such that for $I=[0,E_W]$, for any $x\in\Z^d$ and $\varphi\in \ran P_{0,x,L}(I)$, we have
\be\label{lemma-WE-formula}  \angles{V_{x,L}\varphi,\varphi}_{\L_L(x)} \geq C \norm{\varphi}_{\L_L(x)}^2
\ee
\end{lem}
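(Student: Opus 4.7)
The strategy is the discrete incarnation of the Bourgain--Kenig spatial averaging argument referenced in the introduction: a spectral function $\varphi$ at low energy for $-\Delta$ has small Dirichlet form, hence is slowly varying on length scales comparable to $1/\sqrt{E_W}$; provided $E_W$ is chosen small relative to the Delone parameter $R$, the $\ell^2$-mass of $\varphi$ on every cube of side $R$ is controlled by the square of its value at a guaranteed Delone site in that cube. Summing over a tiling of $\L_L(x)$ immediately produces \eqref{lemma-WE-formula}.

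The first step is a discrete Poincar\'e-type estimate. Tile $\L_L(x)$ by pairwise disjoint closed cubes $\{C_j\}_j$ of side $R$ in the max-norm (merging incomplete cubes near $\partial\L_L(x)$ into neighboring tiles of side at most $2R$). By the $R$-Delone property each $C_j$ contains at least one point $\gamma_j\in D\cap C_j$. For each $n\in C_j$ fix a monotone nearest-neighbor lattice path $\Pi_{n,\gamma_j}\subset C_j$ from $\gamma_j$ to $n$ of length $\leq dR$; telescoping $\varphi(n)-\varphi(\gamma_j)$ along this path and applying Cauchy--Schwarz to the resulting length-$dR$ sum yields
\[
\abs{\varphi(n)}^2 \leq 2\abs{\varphi(\gamma_j)}^2 + 2dR \sum_{(k,k')\in \Pi_{n,\gamma_j}} \abs{\varphi(k)-\varphi(k')}^2.
\]
Summing over $n\in C_j$ and then over $j$ --- the path bonds lie entirely inside $C_j$, so nothing is double-counted --- gives
\[
\norm{\varphi}_{\L_L(x)}^2 \leq A_R\sum_{\gamma\in D\cap\L_L(x)}\abs{\varphi(\gamma)}^2 + B_R \angles{-\Delta\varphi,\varphi}_{\L_L(x)},
\]
with constants $A_R,B_R$ depending only on $R$ and $d$.

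The second step is to feed in the low-energy hypothesis. For $\varphi\in\ran P_{0,x,L}(I)$ with $I=[0,E_W]$, the spectral theorem gives $\angles{H_{0,x,L}\varphi,\varphi}\leq E_W\norm{\varphi}^2$, and the Dirichlet form $\angles{-\Delta\varphi,\varphi}_{\L_L(x)}$ is bounded above by $\angles{H_{0,x,L}\varphi,\varphi}$ (regardless of the choice of self-adjoint boundary condition, since the bond sum is one term of the restricted quadratic form). Choosing $E_W := 1/(2B_R)$ then turns the Poincar\'e bound into $\norm{\varphi}_{\L_L(x)}^2\leq 2A_R\angles{V_{x,L}\varphi,\varphi}_{\L_L(x)}$, which is \eqref{lemma-WE-formula} with $C=1/(2A_R)$. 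All constants depend only on $R$ and $d$, so the estimate is uniform in $x\in\Zd$ and in $L>R$.

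The main technical hurdle is the discrete Poincar\'e estimate itself, specifically tracking the explicit $R$-dependence of $A_R$ and $B_R$ and arranging the path construction so that the edge sums add up at most to $\angles{-\Delta\varphi,\varphi}_{\L_L(x)}$ with no multiplicity loss; working with a disjoint tiling and in-tile paths makes this clean. A minor bookkeeping subtlety is the treatment of partial cubes near $\partial\L_L(x)$, handled by the merging convention above. Note in particular that nothing beyond the bare $R$-Delone property is used --- no finite local complexity, no pattern frequency, no information on the IDS --- in agreement with the hypotheses of Theorem \ref{thm-main}.
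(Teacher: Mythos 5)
Your proof is correct and reaches the same conclusion by a genuinely different route. The paper follows the Bourgain--Kenig spatial averaging scheme: it introduces the averaged potential $W_L(n)=(4R+1)^{-d}\sum_{\gamma\in\L_{2R}(0)}V_L(n-\gamma)$, observes that $W_L\geq (4R+1)^{-d}\chi_{\L_L}$ by the Delone property, and then bounds the error $\abs{\angles{(W_L-V_L)\varphi,\varphi}}$ by a telescoping path argument that produces a term of order $R\sqrt{\angles{H_0\varphi,\varphi}}$. You instead establish a direct discrete Poincar\'e inequality by tiling $\L_L(x)$ with $R$-cubes, pinning the mass in each tile to a Delone site via in-tile paths, and summing. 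Both arguments hinge on the same telescoping plus Cauchy--Schwarz step, but yours bypasses the auxiliary potential $W_L$ entirely and, because it keeps the Dirichlet form term linear rather than under a square root, it produces a quantitatively larger admissible window: $E_W\sim R^{-(d+1)}$ as opposed to the paper's $E_W\sim R^{-2(d+1)}$ (both give $C\sim R^{-d}$). This is a modest but genuine improvement in the explicit $R$-dependence.

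One small imprecision worth fixing: the parenthetical ``the path bonds lie entirely inside $C_j$, so nothing is double-counted'' is true across tiles but not within a tile --- a single bond in $C_j$ is typically traversed by the paths of many distinct $n\in C_j$, with multiplicity up to $\abs{C_j}\sim R^d$. That multiplicity is harmless (it is absorbed into $B_R\sim dR\cdot R^d$), but the phrasing should make clear that the disjointness is what lets you sum over $j$ without loss, while the in-tile overcounting is what drives the size of $B_R$. Also note that for merged boundary tiles of side up to $2R$, the path length bound should read $\leq 2dR$; again this only affects constants.
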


\begin{proof}
For simplicity, we omit the center of the box $x$ from the notation, since the results are uniform in $x$, and write $\norm{\cdot}$ instead of $\norm{\cdot}_{\L_L}$. Let $L> R$ and consider the spatial average of $V_L(n)$ over the cube $\L_{2R}=\{n\in\Z^d\,:\, \abs{n}_\infty\leq 2R  \}$ given by
\be W_L(n):=\frac{1}{(4R+1)^d} \sum_{\gamma\in\L_{2R}(0)}V_L(n-\gamma).  \ee
Note that for each point $n\in\L_L$, the sum in the last line is non zero. In particular, for $n\in\L_L$ near the boundary of $\L_L$, since $L>R$ there always exists a sub-cube $\L_{R}\subset \L_{2R}(n)$, completely contained in $\L_L$, that contains at least one point of $D$.
By averaging in space we retrieve the following covering condition:

\be W_L(n)\geq \frac{1}{(4R+1)^d} \sum_{m\in\L_L}\delta_m(n)= \frac{1}{(4R+1)^d} \, \chi_{\L_L}.\ee
Let $\varphi\in \ran P_{0,L}([0,1])$, with $\supp \varphi \subset \L_L$ and $\norm{\varphi}=1$, then

\begin{align}\label{eq-bound-V}
\angles{V_L\varphi,\varphi}\ & =  \angles{W_l \varphi,\varphi}+ \angles{(V_L-W_L)\varphi,\varphi} \non\\
& \geq \frac{1}{(4R+1)^d} \norm{\varphi}^2 - \angles{(W_L-V_L)\varphi,\varphi}.
\end{align}
In order to obtain a lower bound, we need an upper bound on the second term in the r.h.s. of the last inequality.

\begin{align}
\left< (W_L - V_L )\varphi, \varphi \right> & =  \frac{1}{(4R+1)^d}\displaystyle\sum_{n\in\Z^d} \overline{\varphi(n)} \left(\displaystyle\sum_{\gamma\in \L_{2R}(0)} V_L(n-\gamma)\varphi(n) \right)\non\\
& \quad \quad \quad \quad \quad\quad\quad\quad\quad \quad\quad\quad- \displaystyle\sum_{n\in\Z^d} \overline{\varphi(n)} V_L(n)\varphi(n)\non\\
& =\frac{1}{(4R+1)^d}\displaystyle\sum_{\gamma\in \L_{2R}(0)} \displaystyle\sum_{n\in\Z^d} \overline{\varphi(n+\gamma)}V_L(n)\varphi(n+\gamma) \non\\
&\quad \quad \quad \quad-  \frac{1}{(4R+1)^d}\displaystyle\sum_{\gamma\in \L_{2R}(0)} \displaystyle\sum_{n\in\Z^d} \overline{\varphi(n)} V_L(n)\varphi(n)\non\\
& =   \frac{1}{(4R+1)^d}\displaystyle\sum_{\gamma\in \L_{2R}(0)} \left(\left<V_L \varphi (\cdot +\gamma),  \varphi(\cdot +\gamma)\right> - \left< V_L\varphi, \varphi \right>   \right).
\end{align}
Note that%
%\begin{align}
%|\left< (W_L - V_L) \varphi, \varphi \right> | & \leq \frac{1}{(2R+1)^d}\displaystyle\sum_{\gamma\in \L_R(0)} \left|\left< V_L \varphi (\cdot +\gamma), \varphi(\cdot +\gamma) \right> - \left<  V_L\varphi, \varphi \right>   \right|\non\\
%\end{align}
\begin{align}
 \left|\left< V_L \varphi (\cdot +\gamma),\right. \right. & \left. \left. \varphi(\cdot +\gamma) \right>  -  \left<  V_L\varphi, \varphi \right>   \right| \non\\
&  =\left|\left<  V_L \varphi (\cdot +\gamma), (\varphi(\cdot +\gamma)-\varphi) \right> + \left<  V_L\varphi (\cdot+\gamma), \varphi \right>- \left<  V_L\varphi, \varphi \right>   \right| \non\\
& = \left|\left<  V_L \varphi (\cdot +\gamma), (\varphi(\cdot +\gamma)-\varphi) \right> + \left<  V_L(\varphi (\cdot+\gamma) -\varphi), \, \varphi \right>   \right| \non\\
& \leq 2\norm{V_L}_\infty \norm{\varphi(\cdot +\gamma)-\varphi}.
\end{align}

Thus
\be  |\left< (W_L - V_L) \varphi, \varphi \right> |\leq \frac{2\norm{V_L}_\infty}{(4R+1)^d}\displaystyle\sum_{\gamma\in \L_{2R}(0)}\norm{\varphi(\cdot +\gamma)-\varphi}.  \ee

For every $\gamma\in \L_{2R}(0)$, consider the shortest path between $\gamma$ and $0$ (if this is not unique, pick one).
Let $\{e_i\}_{i=1}^d$ be the canonical base of $\Zd$. For $\gamma=\sum_{i=1}^d a_i e_i$, this shortest path can be written as a sequence of vectors in $\Zd$:
\be \beta^\gamma_1:=e_1, \,  \beta^\gamma_2:=2e_1,...,  \beta^\gamma_{a_1}=a_1e_1, \ee
\be \beta^\gamma_{a_1+1}=a_1e_1+e_2, ...  , \beta^\gamma_k=a_1e_1+...+a_de_d-1, \non\ee
where $k=a_1+a_2+...a_d$. Since $\abs{\gamma}_\infty\leq 2R$, we have that $0\leq k\leq 2Rd$ and they satisfy
\be \abs{\beta^\gamma_{i+1}-\beta^\gamma_i}_1=1 \quad \mbox{for } i=0,1,2,...k, \ee
with $\beta^\gamma_0:=0$ and $\beta^\gamma_{k+1}:=\gamma$. Then,

\begin{align} \norm{\varphi(\cdot+\gamma)-\varphi}& =\norm{\varphi(\cdot+\gamma) \pm \varphi(\cdot+\beta^\gamma_k)\pm \varphi(\cdot+\beta^\gamma_{k-1})...\pm \varphi(\cdot+\beta^\gamma_1)-\varphi(\cdot) } \non\\
& \leq \norm{\varphi(\cdot+\gamma)- \varphi(\cdot+\beta^\gamma_k)} + \norm{\varphi(\cdot+\beta^\gamma_k)- \varphi(\cdot+\beta^\gamma_{k-1})} \non\\
& \hspace{4.5cm}+...+\norm{\varphi(\cdot+\beta^\gamma_1)-\varphi(\cdot)}\non.
\end{align}
This is a sum of at most $2Rd+1$ terms of the form
\be \norm{\varphi(\cdot+m)-\varphi(\cdot+n)}\quad \mbox{ where } \abs{m-n}_1=1,  \ee
and each of these terms is inferior or equal to

\be \sqrt{2\angles{H_0\varphi,\varphi}_{\L_L}}= \left( \sum_{n\in\L_L} \sum_{\abs{m-n}_1=1}\abs{ \varphi(m)-\varphi(n)}^2 \right)^{1/2}. \non\ee
Then, $\forall \gamma\in\L_{2R}(0)$, we have that $\norm{\varphi(\cdot+\gamma)-\varphi}\leq (2Rd+1)\sqrt{2\angles{H_0\varphi,\varphi}_{\L_L}}$.
This implies that
\be \sum_{\gamma\in\L_R(0)}\norm{\varphi(\cdot+\gamma)-\varphi}\leq (4R+1)^d (2Rd+1)\sqrt{2\angles{H_0\varphi,\varphi}_{\L_L}}.\ee
Then,

\be \abs{\angles{(W_L-V_L)\varphi,\varphi}} \leq 2\norm{V_L}_\infty 4\sqrt{2}Rd\sqrt{\angles{H_0\varphi,\varphi}_{\L_L}} \ee
where we used that $R\geq 1$. Replacing this in \eqref{eq-bound-V} gives
\be \angles{V_L\varphi,\varphi} \geq  \frac{1}{(4R+1)^d}- 8d\sqrt{2}\norm{V_L}_\infty R \left(\angles{H_0\varphi,\varphi}_{\L_L}\right)^{1/2}:=C  \ee
Since $\norm{V_L}_\infty=1$ and $R\geq 1$, for $C$ to be positive, it is enough that

\be \tilde E_W:=\left( (8\sqrt{2}d\, R(4R+1)^d\right)^{-2}> \angles{H_0\varphi,\varphi}_{\L_L}\ee

That is, there exists an energy $E_W:=q^2\tilde E_W>0$ with $q\in(0,1)$ such that for $I=[0,E_W]$ and $\varphi\in \ran P_{0,L}(I)$, \eqref{lemma-WE-formula} holds with a constant $C=(1-q)(4R+1)^{-d}$.
\end{proof}
\subsection{The initial length scale estimate}
To start the Bootstrap Multiscale analysis in the non ergodic setting, it is enough to prove the existence of a spectral gap above $E_0:=\inf \sigma(H_\omega)=0$ a.s. for the finite-volume operator $H_{\omega,x,L}$. This was done in the continuous setting in \cite[Proposition 3.1]{G08} using the spatial averaging argument from Section \ref{section-WE} and holds also in the discrete case in the following form

\begin{prop} For $p>0$, there exists a scale $\tilde L=\tilde L(d,\mu,p,R)$ such that for all $L\geq \tilde L$  we have

\be\label{eq-ILSE} \inf_{x\in\Z^d} \Prob{H_{\omega,x,L}\geq CR^{-2(d+2)}(\log L)^{-2/d}} \geq 1-L^{-pd}, \ee
where the positive constant $C$ depends on the parameters $p,d,M$ and $\mu$.
\end{prop}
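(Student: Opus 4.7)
The strategy adapts the continuous-setting argument of \cite[Proposition 3.1]{G08} to our discrete framework. The inputs are the spectral lifting of Lemma \ref{lem-WE} (which, at any scale larger than $R$, lifts the bottom of the spectrum on the low-energy subspace by the dimensional constant $C_R=(1-q)(4R+1)^{-d}$) combined with an IMS-type localization on an intermediate scale $\ell_L\sim R^{d+1}(\log L)^{1/d}$. The plan is to show that, with the required probability, every sub-cube at scale $\ell_L$ carries a spectral gap which, after paying the IMS gluing error of order $1/\ell_L^2$, produces the announced gap for $H_{\omega,x,L}$.

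Concretely, I would first partition $\L_L(x)$ into sub-cubes $\L_{\ell_L}(y_\alpha)$ and choose a smooth partition of unity $\{\chi_\alpha\}$ with $\sum_\alpha\chi_\alpha^2\equiv 1$ on $\L_L(x)$, $\supp\chi_\alpha\subset\L_{2\ell_L}(y_\alpha)$, and $\|\nabla\chi_\alpha\|_\infty\leq c/\ell_L$. The discrete IMS identity then yields
\[
H_{\omega,x,L}\geq \sum_\alpha\chi_\alpha H_{\omega,y_\alpha,2\ell_L}\chi_\alpha-\frac{c'}{\ell_L^2}.
\]
On each sub-cube I would introduce the conditional good event $\mathcal B_\alpha=\{\omega(\gamma)\geq\eta_L \mbox{ for all } \gamma\in D\cap\L_{2\ell_L}(y_\alpha)\}$ with a threshold $\eta_L$ to be chosen, so that $V_\omega\geq\eta_L V_{y_\alpha,2\ell_L}$ on $\mathcal B_\alpha$. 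Running the decomposition $\psi=\psi_1+\psi_2$ used in the proof of Lemma \ref{lem-WE}, with $\psi_1\in\ran P_{0,y_\alpha,2\ell_L}([0,E_W])$, combined with $\langle H_{0,y_\alpha,2\ell_L}\psi,\psi\rangle\geq E_W\|\psi_2\|^2$, the lifting $\langle V_{y_\alpha,2\ell_L}\psi_1,\psi_1\rangle\geq C_R\|\psi_1\|^2$, and Young's inequality on the cross terms $\langle V_{y_\alpha,2\ell_L}\psi_1,\psi_2\rangle$, yields $H_{\omega,y_\alpha,2\ell_L}\geq \frac{1}{2}\eta_L C_R$ on $\mathcal B_\alpha$ (provided $\eta_L\leq E_W/2$).

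The parameters $\ell_L$ and $\eta_L$ are then tuned so that both $c'/\ell_L^2$ and $\frac{1}{2}\eta_L C_R$ are of order $R^{-2(d+2)}(\log L)^{-2/d}$, and so that $\PP\bigl(\bigcap_\alpha\mathcal B_\alpha\bigr)\geq 1-L^{-pd}$. The choice $\ell_L\sim R^{d+1}(\log L)^{1/d}$ makes $1/\ell_L^2$ of the right order, and the threshold $\eta_L\sim R^{-d-2}(\log L)^{-2/d}$ then matches the sub-cube gap. The probability is controlled by a Chernoff-type concentration estimate on the number of bad sites $\gamma\in D\cap\L_{2\ell_L}(y_\alpha)$ with $\omega(\gamma)<\eta_L$, exploiting the regularity of $\mu$. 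Once the almost-sure gap above $0$ is in place, the initial length scale estimate in the form \eqref{eq-ILSE} follows by a standard Combes-Thomas resolvent bound at energies separated from the spectrum of $H_{\omega,x,L}$ by this gap.

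The main obstacle is the probabilistic book-keeping in the last step: a naive union bound over the $\sim(L/\ell_L)^d$ sub-cubes, each requiring all of the $\sim\ell_L^d$ random variables in the sub-cube to exceed $\eta_L$, would demand $L^d\mu([0,\eta_L])\leq L^{-pd}$, forcing $\eta_L$ to be polynomially small in $L$ and destroying the logarithmic gap. Overcoming this is precisely where the Lifshitz-tails-style exponential concentration of \cite[Proposition 3.1]{G08} enters: one must use that $|D\cap\L_{2\ell_L}(y_\alpha)|\gtrsim(\ell_L/R)^d\sim\log L$, so that deviations of the number of bad sites from their mean decay exponentially in $\log L$ and convert into the required polynomial probability bound in $L$. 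Making this discrete Chernoff-style estimate explicit, together with verifying the per-sub-cube lifting on the modified good event (``most'' sites good, not all), is the technical heart of the proof.
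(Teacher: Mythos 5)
Your proposal takes a genuinely different route from the paper, and as you yourself flag at the end, it leaves the key technical step unresolved.

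The paper does not use IMS localization or a site-by-site thresholding event at all. Instead it reruns the spatial-averaging machinery of Lemma~2.2 directly on the \emph{random} potential: it defines the averaged potential
\[
W_{\omega,L}(n)=\frac{1}{(4RK+1)^d}\sum_{\gamma\in\Lambda_{2RK}(0)}V_{\omega,L}(n-\gamma),
\]
over a window of size $RK$ with $K\sim(\log L)^{1/d}$, and lower bounds it pointwise by the \emph{minimum over $j\in\Lambda_L\cap D$} of local averages $\frac{1}{K^d}\sum_{\gamma\in\Lambda_{K/3}(j)\cap D}\omega(\gamma)$. A large-deviation estimate on these local averages of i.i.d.\ variables then gives $W_{\omega,L}\geq \frac{1}{(5R)^d}\frac{\bar\mu}{2}\chi_{\Lambda_L}$ with probability $\geq 1-L^d e^{-C_{\mu,R,d}K^d}$, after which the difference $\langle(W_{\omega,L}-V_{\omega,L})\varphi,\varphi\rangle$ is bounded by the purely deterministic kinetic-energy argument from Lemma~2.2, no probability needed. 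Since $V_{\omega,L}\geq 0$ one also has $\langle H_0\varphi,\varphi\rangle\leq\langle H_{\omega,L}\varphi,\varphi\rangle$, and restricting to $\varphi\in\operatorname{Ran}P_{\omega,L}([0,1])$ closes the bound. The crucial point is that the probabilistic event is formulated in terms of \emph{averages} over $\sim K^d\sim\log L$ sites, and those averages concentrate exponentially around $\bar\mu$; no thresholding of individual $\omega(\gamma)$ is required.

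By contrast, your good event $\mathcal B_\alpha=\{\omega(\gamma)\geq\eta_L\text{ for all }\gamma\in D\cap\Lambda_{2\ell_L}(y_\alpha)\}$ couples the probability to individual small sites, which is exactly what defeats the naive union bound as you correctly observe. Your suggested repair — allowing a small fraction of bad sites and controlling it by Chernoff — has the right flavor (the mechanism is the same $\sim\log L$ sites giving exponential concentration), but it creates a new gap: once a few sites fall below $\eta_L$, the covering inequality $V_\omega\geq\eta_L V_{y_\alpha,2\ell_L}$ fails, and the spectral lifting $\langle V_{y_\alpha,2\ell_L}\psi_1,\psi_1\rangle\geq C_R\|\psi_1\|^2$ from Lemma~2.2 hinges on every $2R$-subcube containing at least one \emph{contributing} site. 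You would need to argue that removing a $\delta$-fraction of the sites still leaves an $R'$-Delone subset with $R'$ comparable to $R$, and then rerun the lifting at the degraded parameter — a non-trivial step you have not carried out and which the paper avoids entirely by working with averages instead of thresholds. In addition, the IMS decomposition brings a separate constant bookkeeping (the gluing error $c'/\ell_L^2$ must be strictly dominated by the sub-cube gap), whereas the paper's single global averaging does not produce such a competing error term. So while your outline is plausible in spirit, as written it does not constitute a proof: the step you yourself call ``the technical heart'' is precisely what is missing, and the paper achieves the result by a cleaner mechanism that never encounters that obstacle.
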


For the reader's convenience, we sketch the proof of \cite[Proposition 3.1]{G08} with the corresponding changes in the discrete setting.

\begin{proof} For simplicity we write $\norm{\cdot}$ for $\norm{\cdot}_\LL$ and, since the results are uniform in $x$, we omit this subscript from the notation. Take $L\in\N, L>R$, and consider the spatial average of $V_{\omega,L}=V_\omega \chi_{\L_L}$ over the cube $\L_{2RK}(0)$, for some $K>1$ to be chosen later. For $n\in\L_L$, define

\begin{align} W_{\omega,L}(n) & :=\frac{1}{(4RK+1)^d} \sum_{\gamma\in\L_{2RK}(0)}V_{\omega,L}(n-\gamma)\\
& \geq \frac{1}{(5R)^d}\left( \min_{j\in \L_L\cap D}\frac{1}{K^d}\sum_{\gamma\in\L_{K/3}(j)\cap D}\omega(\gamma) \right)\chi_{\L_L}
\end{align}
We can obtain a lower bound for the last line using the theory of large deviations (see e.g. \cite[Eq. 3.10]{G08}). We obtain, for $K$ big enough

\be \Prob{W_{\omega,L}\geq \frac{1}{(5R)^d}\frac{\bar{\mu}}{2}\chi_{\LL}  }>1-L^de^{-C_{\mu,R,d} K^d}, \ee
for some constant $C_{\mu,R,d}$ depending on the probability measure $\mu$,$R$ and $d$, where $\bar{\mu}$ is the mean of $\mu$ (the dependence on $R,d$, proportional to $R^{-d}$, comes from the fact that $K^d/R^d\lesssim\abs{\L_K\cap D}\lesssim K^d$).
Therefore,

\be \bar H_{\omega,L}:=-\Delta+W_{\omega,L}\geq \frac{1}{(5R)^d}\frac{\bar \mu}{2}\quad \mbox{on }\LL, \ee
with a probability larger than $1-L^de^{C_{\mu,R,d} K^d}$.

Take $\varphi\in\ell_c(\Zd)$, $\supp\varphi\subset \LL$ and $\norm{\varphi}=1$. Then, with a probability larger than $1-L^de^{-C_{\mu,R,d} K^d}$ we have that
\begin{align}\label{eq-bound-H} \angles{H_{\omega,L}\varphi,\varphi} & =\angles{\bar H_{\omega,L}\varphi,\varphi}-\angles{(W_{\omega,L}-V_{\omega,L})\varphi,\varphi}\non\\
&\geq \frac{1}{(5R)^d}\frac{\bar{\mu}}{2}\norm{\varphi}- \angles{(W_{\omega,L}-V_{\omega,L})\varphi,\varphi}.
\end{align}
The second term in the r.h.s. of the last line can be estimated as its deterministic counterpart \eqref{eq-bound-V} in the proof of Lemma \ref{lem-WE}. Recalling that $\norm{V_\omega}_\infty=M$, we get

\be\label{eq-diff} |\left< (W_{\omega,L} - V_{\omega,L}) \varphi, \varphi \right> | \leq 8d\sqrt{2} MRK \sqrt{\angles{H_0\varphi,\varphi}_\LL} \ee

Since $V_{\omega,L}\geq 0$, $\angles{H_0\varphi,\varphi}\leq \angles{H_{\omega,L}\varphi,\varphi}$. Moreover, since we work at the bottom of the spectrum we can assume $\angles{H_{\omega,L}\varphi,\varphi}\leq 1$, by taking $\varphi\in \ran P_{\omega,L}([0,1])$. This yields
\be \angles{H_{\omega,L}\varphi,\varphi}\geq \left((5R)^d 2(1+8d\sqrt{2}MRK)\right)^{-2}(\bar \mu)^2, \ee
with a probability larger than $1-L^de^{-C_{\mu,R,d} K^d}$.

Given $p>0$, take $K=\left( \frac{(p+1)d}{C_{\mu,R,d}}\log L \right)^{1/d}$, with $L$ big enough depending on the parameters $d,\mu,p,R$, so we obtain \eqref{eq-ILSE}.

\end{proof}

\subsection{Proof of Theorem \ref{thm-main}: Localization through the Multiscale Analysis}

The Bootstrap Multiscale Analysis is at its core an iteration procedure that shows a fast decay of the local resolvents of $H_{\omega,x,L}$ in some energy interval, as $L$ tends to infinity. This in turn yields, among other things, dynamical localization. For a detailed description of the method, see \cite{K07,Kl08,S}.

As stated in \cite[Theorem 3.4]{GK01}, in order to perform the Multiscale Analysis in some energy interval $\mathcal I$ and obtain all its consequences, it is enough to verify a Wegner estimate and an initial length scale estimate (ILSE) in $\mathcal I$. This method requires some standard regularity conditions on the random potential \cite[Section 2]{GK01}. The Multiscale Analysis can be applied to non ergodic models that satisfy the aforementioned assumptions for finite-volume operators $H_{\omega,x,L}$ uniformly with respect to the center $x\in\Zd$ of the cube, see \cite[Section 2]{RM12}.

In our case, we have assumed that  the random variables are i.i.d. and have a regular probability distribution, so $V_{\omega}$ as defined in \eqref{formula-ranpot} satisfies the standard regularity assumptions. Then, what is left to verify in order to apply the Multiscale Analysis is that the two main ingredients hold, the Wegner estimate and the ILSE, uniformly with respect to the center $x\in\Zd$ \cite[Theorem 2.3]{RM12}. The first is obtained in Theorem \ref{thm-WE}, while the ILSE is a consequence of Proposition \ref{prop}.
%The ILSE  and is a consequence of Proposition \ref{prop} and the Combes-Thomas estimate.
Namely, we proved that there exists a spectral gap in $\sigma(H_{\omega,x,L})$ above the spectral infimum $E_0=0$, then the Combes-Thomas estimate transforms this into a decay of the resolvent for energies near $E_0$ (see e.g. \cite[Section 11.2]{K07} and the references therein).

\bigskip

\section{Comment on the upper band edge}\label{s:upper}
%In order to show that $H_{\omega}$ exhibits localization in the upper band edge, we will study the lower band edge of the auxiliary operator

Note that by reflecting the spectrum of $H_\omega$ with respect to the origin and shifting it by a constant $4d+M$, the problem of studying the upper spectral band edge of $H_\omega$ is equivalent to study the bottom of the spectrum of the operator $\tilde H_\omega$ given by

\begin{align}\label{formula-ranoptilde} \tilde H_\omega = - H_\omega+4d+M & = \Delta+4d+M\sum_{n\in\Z^d\setminus D}\delta_n+ \sum_{n\in D} \tilde \omega(n)\delta_n\non\\
& = \tilde H_0+\tilde V_\omega
 \end{align}
 where
 \be\label{eq-defH0} \tilde H_0= \Delta+4d+M\sum_{n\in\Z^d\setminus D}\delta_n:=\Delta+4d +V_0,\ee
and $\tilde V_\omega$ is the alloy type potential of the form \eqref{formula-ranpot} corresponding to the random variables $\tilde\omega(n)=M-\omega(n)$. Denote by $\tilde \mu$ the probability measure for $\tilde \omega(n)$. We have that $\supp \tilde \mu\subset[0,M]$. Here we assume moreover the following
\be 0\in\supp \tilde \mu, \quad\mbox{that is, } M\in\supp \mu .\ee

% Since $0\in\supp\tilde\mu$, by a Borel-Cantelli argument we have that
% \be\label{formula-borel-cantelli}\sigma(\tilde H_0)\subset \sigma(\tilde H_\omega)\quad\mbox{ a.s.}, \ee
% where
% \be\label{eq-defH0} \tilde H_0= \Delta+4d+M\sum_{n\in\Z^d\setminus D}\delta_n:=\Delta+4d +V_0.\ee
%Indeed, a Borel-Cantelli argument for $0\in\supp\mu$ shows that for any length $L>0$, there exists almost surely a sequence of disjoint cubes where the potential $V_\omega$ is arbitrarily small, see \cite[Eq. 6.22-6.27]{RM12} or \cite[Proposition 3.8]{K07}. We can then proceed as in the proof of \cite[Theorem 3.9]{K07}. Namely, for any $E\in\sigma(H_0)$ we can construct a Weyl sequence for $H_\omega$ by translating the Weyl sequence for $E$ associated to $H_0$, so their supports are contained in the cubes aforementioned.

 %Note that background potential $V_0$ in \eqref{eq-defH0} is of Delone type, since $\Z^d\setminus D$ is also a Delone set. We write $E_*:=\inf \sigma(\tilde H_0)$.
% Then, since $\tilde V_\omega\geq 0$, by \eqref{formula-borel-cantelli} we have that $\inf \sigma(\tilde H_\omega)=E_*\geq 0$ a.s. Since we do not know the nature of the spectrum of $\tilde H_0$, in order to show that any dynamical localization result at the bottom of the spectrum of $\tilde H_\omega$ is non empty, we need to prove that there is almost sure spectrum in, at least, a neighborhood of $E_*$.
%For this,

 We need to recall some basic definitions from the theory of Delone dynamical sets, which will give us later a way of characterize the Delone sets for which working at the band edges is well defined (see \cite[Section 2.3]{MR12}).
\begin{defn}\label{defDDS}
\begin{itemize}
 \item[i)] Given an $R$-Delone set $D\subset \Z^d$, any finite subset $Q\subset D$ is called a \emph{pattern} of $D$. Two sets $Q,Q'\subset D$ are called \emph{equivalent} if there exists $x\in \Z^d$ such that $x+Q=Q'$.
%\item[ii)] We say $D$ is of \emph{finite local complexity} if for every compact set $K\subset \Z^d$, there exists a finite collection of patterns, denoted by  $\mathcal F$, such that every pattern of $D$ with support equivalent to $K$, is equivalent to some pattern in $\mathcal F$.  That is, for any given compact set $K\subset \Rd$, there are finitely many $K$-patterns in $D$. Note that $\Z^d$ is of finite local complexity, as every of its subsets. In our case, since $D\subset \Z^d$, $D$ satisfies trivially this property.
%\item[iii)] A Delone set $D$ of finite local complexity is called \emph{linearly repetitive} if there exists a constant $C$ such that for all $R>0$, the ball of radius $CR$ in $D$, $B_{CR}\cap D$,  contains every possible $B_R$-pattern.
\item[iii)] Let $(\L_L)_{L\in\N}$ be a sequence of concentric cubes  of side length $L$ in $\Z^d$.  We define the \emph{pattern frequency} of $Q$ in $D$ as the following limit, if it exists,
\be \eta(Q):= \lim_{L\rightarrow \infty} \frac{ \sharp \{ \tilde Q\subset D\,:\, \exists x\in (-\L_L)\, \mbox{ s.t. } x+Q=\tilde Q \} }{\abs{\L_L}}, \ee
that is, the number of equivalent patterns of $Q$ in $D$ per volume converges (it is known that is quantity is always bounded, so the question is to know whether the equality $\liminf = \limsup$ holds \cite[Lemma 2.25]{MR12}).
\item[v)] We say that $D$ has \emph{uniform pattern frequency} if for any pattern $Q\subset D$ the sequence
\be\label{upf} \eta_{x,L}(Q):= \frac{\sharp \{ \tilde Q\subset D\,:\, \exists y\in (x+\L_L) \, \mbox{ s.t. }y+\tilde Q=Q \} }{ \abs{\L_L} } \ee
converges uniformly with respect to $x\in \Z^d$, when $L$ goes to infinity.  Moreover, we say that $D$ has a \emph{strict} uniform pattern frequency (SUPF) if this limit is strictly positive.
\end{itemize}
\end{defn}

It was proven in \cite{GMRM} (see also \cite{MR07}) that if the Delone set $D$ is of finite local complexity and of strict uniform pattern frequency, then there exists a set $\Sigma\subset \R$ such that we have $\sigma(H_\omega)=\sigma(\tilde H_\omega)=\Sigma$ almost surely. What is left to prove is that $E_*=\sup \sigma(H_\omega)=\inf \sigma(\tilde H_\omega)$ is not an isolated spectral value.

Recall that for the lower spectral edge, if $0\in\supp\mu$, then one can use a Borel-Cantelli argument as in \cite[Eq. 6.4]{RM12} and follow the proof of \cite[Theorem 3.9]{K07} to show that $\sigma(-\Delta)\subset \sigma( H_\omega)$ almost surely, using the translation invariance of $-\Delta$. Note that the background potential $V_0$ in \eqref{eq-defH0} is of Delone type, since $\Z^d\setminus D$ is also a Delone set if $D$ has the SUPF property, so $\tilde H_0$ is not invariant under arbitrary translations. To recover an analogous result for $\sigma(\tilde H_\omega)$ and $\sigma(\tilde H_0)$, we can replace this translation invariance with the SUPF property. We have the following,

\begin{prop}\label{prop-H0} Let $0\in\supp\tilde\mu$ and assume that the Delone set $D$ has the property of strict uniform pattern frequency. Then, for $\tilde H_0=\Delta+4d+V_0$ defined in \eqref{eq-defH0} and $\tilde H_\omega$  defined in \eqref{formula-ranoptilde}, we have
 \be\label{formula-borel-cantelli}\sigma(\tilde H_0)\subset \sigma(\tilde H_\omega)\quad\mbox{ almost surely.} \ee
\end{prop}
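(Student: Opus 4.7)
The plan is to adapt the classical Borel--Cantelli argument that gives $\sigma(-\Delta) \subset \sigma(H_\omega)$ in the ergodic setting, replacing the translation invariance of $-\Delta$ by the SUPF property of $D$. Start from Weyl's criterion: given $E \in \sigma(\tilde H_0)$ and $\varepsilon > 0$, pick $\varphi \in \ell_c(\Zd)$ with $\norm{\varphi} = 1$ and $\norm{(\tilde H_0 - E)\varphi} < \varepsilon$, choose a cube $\L := \L_L(0) \supset \supp\varphi$, and let the reference pattern be $Q := D \cap \L$.

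The core of the argument is to produce a sparse infinite family of \emph{exact} translates of the local configuration $(Q,\L)$. Let $\mathcal X := \set{x \in \Zd \,:\, D \cap (\L + x) = Q + x}$. Since $D$ is FLC (automatic in $\Zd$) and satisfies SUPF, the lower density of $\mathcal X$ in $\Zd$ is bounded below by a strictly positive constant, uniformly in the base point. A greedy selection then yields an infinite sequence $(x_j)_{j \in \N} \subset \mathcal X$ with $\abs{x_j - x_k}_\infty > 2L+1$ for $j \neq k$, so that the cubes $\L + x_j$, and in particular the subsets $Q + x_j \subset D$, are pairwise disjoint.

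Now transport the Weyl sequence by translation. Setting $\psi_j := \varphi(\cdot - x_j)$, translation invariance of $-\Delta$ combined with the identity $V_0(n) = V_0(n - x_j)$ for $n \in \L + x_j$, a direct consequence of $x_j \in \mathcal X$, gives $\tilde H_0 \psi_j = (\tilde H_0 \varphi)(\cdot - x_j)$, and in particular $\norm{(\tilde H_0 - E)\psi_j} < \varepsilon$. For the random contribution, apply the second Borel--Cantelli lemma to the events $A_j := \set{\tilde\omega(n) \leq \varepsilon \mbox{ for every } n \in Q + x_j}$: by the disjointness of the $Q + x_j$ and the i.i.d.\ hypothesis the $A_j$ are independent, and $0 \in \supp\tilde\mu$ gives $\P(A_j) = \tilde\mu([0,\varepsilon])^{\abs{Q}} > 0$, so $A_j$ holds for infinitely many $j$ almost surely. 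On any such $j$ one has $\norm{\tilde V_\omega \psi_j}^2 = \sum_{n \in Q + x_j} \abs{\tilde\omega(n)}^2 \abs{\varphi(n - x_j)}^2 \leq \varepsilon^2$, hence $\norm{(\tilde H_\omega - E)\psi_j} \leq 2\varepsilon$, and Weyl's criterion places $E$ in $\sigma(\tilde H_\omega)$ on a full-measure event. Intersecting these events over a countable dense subset of $\sigma(\tilde H_0)$ and $\varepsilon = 1/k$, $k \in \N$, yields the desired almost-sure inclusion.

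The main obstacle is the passage from the SUPF axiom of Definition \ref{defDDS}, which a priori only controls the density of patterns $Q$, to a positive density of \emph{exact} translates of the local configuration in the cube $\L$. For an FLC set this is standard: inside any cube of a given size there are only finitely many pattern types, so positive uniform frequency of patterns transfers to positive uniform frequency of patches, and the greedy extraction above then produces the well-separated sequence needed for independence in Borel--Cantelli.
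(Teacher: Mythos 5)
Your proposal is correct and follows essentially the same route as the paper: use SUPF (combined with FLC) to extract an infinite, well-separated family of exact translates of the local configuration, translate the Weyl test function, apply the second Borel--Cantelli lemma on the resulting independent events, and conclude via Weyl's criterion. The only organizational difference is that the paper diagonalizes over the index $k$ of a single Weyl sequence (choosing one $v_k$ per $k$ and intersecting the full-measure events $\Omega_k$), while you diagonalize over $\varepsilon = 1/k$ together with a countable dense set of energies; both are standard and equivalent ways to pass from a single $E$ to the almost-sure inclusion of the whole (closed) spectrum.
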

\begin{proof}
Since there is a one-to-one correspondence between the patterns of $D$ and $\Z^d\setminus D$, if $D$ satisfies the SUPF property, so does $\Z^d\setminus D$. Indeed, any pattern $Q$ of $D$ is of the form $Q=D\cap K$, where $K\subset \Z^d$ is a compact set, and $\tilde Q=(\Z^d\setminus D)\cap K$ is the corresponding (unique) pattern in $\Z^d\setminus D$.

Let $E\in\sigma(\tilde H_0)$ and take a Weyl sequence $\varphi_k\in\ell_c(\Zd)$ associated to it.
For every $k$, let $K_k\subset\Zd$ be a finite cube such that $\supp\varphi_k\subset K_k$. Because of the SUPF property, every pattern $K_k\cap \Zd\setminus D$ (and therefore $K_k\cap D$) is repeated infinitely many times in $\Zd$. Fix $k$ and extract a sequence $\{ v_j(k)\}_j\subset \Zd$, with $j$ in some index set $\mathcal J(k)$, such that the cubes $\{K_k+v_j(k)\}_j$ generate patterns that are equivalent to the patterns $\{K_k\cap \Zd\setminus D\}_k$ and are pairwise disjoint.
Note that for every $j\in\mathcal J(k)$,
\be\label{formBC} V_0\chi_{K_k}=V_0\chi_{K_k+v_j(k)} \quad \mbox{and } \tilde V\chi_{K_k}=\tilde V\chi_{K_k+v_j(k)}, \ee
where $\tilde V(n)=\sum_{\gamma\in D}\delta_\gamma(n)$. This implies that the events
\be A_j=\{\omega\, : \omega_\gamma <1/k ,\,\,\forall \gamma\in K_k+v_j(k) \}  \ee
are independent and $\Prob{A_j}=\Prob{A_0}>0$, since $0\in\supp \tilde\mu$. By the Borel-Cantelli Lemma, we have that
\be \Omega_{k}:=\{\omega\,: \omega\in A_j \mbox{ for infinitely many }j\} \ee
has probability one. We have that the countable intersection $\tilde\Omega:=\bigcap_k \Omega_{k}$ has also probability one.

Let us denote the index set of the $A_j$ in $\Omega_{k}$ by $\mathcal J'(k)$. Now, among the $\{v(k)_j\}_{j\in\mathcal J'(k)}$, for every $k$ we can pick a vector $v_k \in \{v(k)_j\}_{j\in\mathcal J'(k)}$ such that the cubes $\{K_k+v_k\}_k$ are pairwise disjoint and $\norm{ V_\omega\chi_{K_k+v_k}}_\infty<1/k$ almost surely.

Then, the sequence $\{ \psi_k:=\varphi_k(\cdot-v_k)\}_k$ with $\supp\psi_k\subset K_k+v_k$, is by construction a Weyl sequence for $E$ and $\tilde H_\omega$, for $\omega\in\tilde\Omega$ (by \eqref{formBC} it is a Weyl sequence for $\tilde H_0$).
\end{proof}

%  A Borel-Cantelli argument for $0\in\supp\mu$ shows that for any length $L>0$, there exists
%   a set $\Omega_L\subset \Omega$ with $\Prob{\Omega_L}=1$ such that there exists a sequence of disjoint cubes of side length $L$ where the potential $V_\omega$ is arbitrarily small for $\omega\in\Omega_L$, see \cite[Eq. 6.22-6.27]{RM12} or \cite[Proposition 3.8]{K07}. We can then proceed as in the proof of \cite[Theorem 3.9]{K07}, using the property of uniform pattern frequency to construct Weyl sequences. Namely, for $E\in\sigma(H_0)$ there exists a Weyl sequence $\varphi_k$. Let $K_k\subset\Zd$ be a set such that $\supp \varphi_k\subset K_k$. For each $K_k$, the Borel-Cantelli argument assures there exists a set $\Omega_k$ of full probability, such that there exists a sequence of disjoint translations of $K_k$ where $V_\omega$ is arbitrarily small there, for $\omega\in\Omega_k$. Consider the set $\tilde \Omega=\cap_k \Omega_k$. Since the intersection is countable, $\Prob{\Omega}=1$.
%   Therefore, for almost every $\omega$,
%  for any $E\in\sigma(H_0)$ we can construct a Weyl sequence for $H_\omega$ by translating the Weyl sequence for $E$ associated to $H_0$, so their supports are contained in the cubes aforementioned.

 We write $E_*:=\inf \sigma(\tilde H_0)$.
 Then, since $\tilde V_\omega\geq 0$, by \eqref{formula-borel-cantelli} we have that $\inf \sigma(\tilde H_\omega)=E_*\geq 0$ a.s. Since we do not know the nature of the spectrum of $\tilde H_0$, in order to show that any dynamical localization result at the bottom of the spectrum of $\tilde H_\omega$ is non empty, we need to prove that there is almost sure spectrum in, at least, a neighborhood of $E_*$.

The following result is known in the literature of deterministic ergodic potentials in dimension 1 (see \cite[Section 6.2, Problem 2]{Su95}).

\begin{prop}\label{prop} If the Delone set $D$ has the property of strict uniform pattern frequency, then the spectrum of $\tilde H_0$ is essential spectrum.
\end{prop}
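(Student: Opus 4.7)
The plan is to verify Weyl's criterion for essential spectrum: it suffices to show that for every $E\in\sigma(\tilde H_0)$ there exists an \emph{orthonormal} sequence $\{\psi_k\}\subset \ell^2(\Zd)$ with $\norm{(\tilde H_0-E)\psi_k}\to 0$, since this forces $E$ to be either a non-isolated point of the spectrum or an eigenvalue of infinite multiplicity. The construction is the deterministic version of the Weyl--Borel--Cantelli argument used in Proposition \ref{prop-H0}: the probabilistic extraction of translations $v_j$ there is replaced by a direct counting step based on SUPF.

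Start from a normalized Weyl sequence $\varphi_k\in\ell_c(\Zd)$ with $\norm{(\tilde H_0-E)\varphi_k}\to 0$. For each $k$ fix a cube $K_k\subset\Zd$ containing the one-step neighborhood of $\supp\varphi_k$; then the values $(\tilde H_0\varphi_k)(n)$ depend on $V_0$ only through the pattern $Q_k:=K_k\cap D$, because the graph Laplacian is local and the potential $V_0$ is determined by $D$.

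Next, invoke SUPF for the pattern $Q_k$: the set
$$T_k:=\set{v\in\Zd\,:\,(K_k+v)\cap D=Q_k+v}$$
has strictly positive lower density in $\Zd$ (equivalently, the complementary pattern $K_k\cap(\Zd\setminus D)$ also occurs with positive frequency, by the one-to-one correspondence noted in the proof of Proposition \ref{prop-H0}). In particular $T_k$ is unbounded, so one can inductively choose $v_k\in T_k$ such that the translated cubes $\{K_j+v_j\}_{j\leq k}$ are pairwise disjoint — at each stage only finitely many boxes must be avoided while $T_k$ is infinite.

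Define $\psi_k:=\varphi_k(\,\cdot\,-v_k)$. Disjointness of the supports gives $\angles{\psi_k,\psi_j}=0$ for $k\neq j$, and $\norm{\psi_k}=1$. By the pattern-matching property $V_0(n)=V_0(n-v_k)$ for $n$ in the one-step neighborhood of $\supp\psi_k$, combined with translation invariance of $\Delta$, one obtains
$$(\tilde H_0\psi_k)(n)=(\tilde H_0\varphi_k)(n-v_k)\quad\text{for all } n\in\Zd,$$
so $\norm{(\tilde H_0-E)\psi_k}=\norm{(\tilde H_0-E)\varphi_k}\to 0$. Hence $\{\psi_k\}$ is a singular Weyl sequence for $E$, giving $E\in\sigma_{\mathrm{ess}}(\tilde H_0)$. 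Since $E\in\sigma(\tilde H_0)$ was arbitrary, $\sigma(\tilde H_0)=\sigma_{\mathrm{ess}}(\tilde H_0)$.

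The only genuinely delicate point is the commutation of $\tilde H_0$ with the translation by $v_k$: this requires the pattern agreement to extend one step beyond $\supp\varphi_k$, which is why the cube $K_k$ is enlarged before SUPF is applied. Once this book-keeping is in place the argument is purely combinatorial, and one sees that SUPF is slightly stronger than what is strictly needed — repetitivity (each pattern of $D$ occurring infinitely often with translations diverging to infinity) would already suffice for this statement, though we use SUPF since it is the standing hypothesis that also secures the almost sure spectrum in the surrounding discussion.
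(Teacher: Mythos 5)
Your argument is correct and follows essentially the same route as the paper's proof: take a Weyl sequence $\varphi_k$ for $E$, enlarge $\supp\varphi_k$ to a cube $K_k$ including a one-step neighborhood (so the Laplacian only "sees" the pattern inside $K_k$), use SUPF to find translations $v_k$ reproducing the pattern with pairwise disjoint translated cubes, and observe that $\psi_k=\varphi_k(\cdot-v_k)$ is then an orthonormal Weyl sequence. The extra book-keeping you highlight (the one-step enlargement of $K_k$, and the remark that repetitivity would already suffice) is accurate but does not change the argument.
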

\begin{proof}
%For simplicity, we denote $A^c=\Z^d\setminus A$, for any proper subset of $\Z^d$.
As explained in the proof of Proposition \ref{prop-H0}, if $D$ satisfies the SUPF property, so does $\Z^d\setminus D$.
Let $E\in\sigma(\tilde H_0)$. By Weyl's criterion, there exists a (normalized) Weyl sequence $\varphi_k$ in the core of $\tilde H_0$, $\ell _c(\Zd)$ such that $\norm{(\tilde H_0-E)\varphi_k}<1/k$ for any $k$ big enough. To prove that $E\in\sigma_{ess}(\tilde H_0)$, it is enough to prove that one can extract an orthogonal Weyl sequence sequence from $\varphi_k$.

For each $\varphi_k$, take a compact set $K_k\subset \Z^d$ such that $\supp \varphi_k\subset K_k$ and $\dist(\supp \varphi,\Z^d\setminus K_k)>1$. Because of the SUPF property, every pattern $K_k\cap \Zd\setminus D$ (and therefore $K_k\cap D$) is repeated infinitely many times in $\Zd$. We can extract a sequence $\{ v_k\}_k\subset \Zd$, such that the cubes $\{K_k+v_k\}_k$ generate patterns that are equivalent to the patterns \{$K_k\cap \Zd\setminus D\}_k$ and are pairwise disjoint.
% Define the pattern $Q_k=(\Z^d\setminus D)\cap K_k$ in the Delone set $\Z^d\setminus D$. Because of the SUPF property, the pattern $Q_k$ appears in $\Z^d\setminus D$ infinitely many times, and we can extract a sequence of patterns $\{Q_k':=Q_k+v_k\}$ that are copies of $Q_k$, pairwise disjoint, for some $\{v_k\}_k\subset\Z^d$.
Consider the sequence $\{\psi_k:=\varphi_k(\cdot-v_k)\}_k$. Since $\supp \psi_k \subset K_k+v_k$, the sequence  is orthogonal.

 Since $4d+\Delta$ is invariant with respect to translations by $v_k$ and $V_0\chi_{K_k+v_k}=V_0\chi_{K_k}$, we get $\norm{(\tilde H_0 -E)\psi_k}=\norm{(\tilde H_0-E)\varphi_k}\leq 1/k$,
i.e., $\{\psi_k\}_k$ is an orthonormal Weyl sequence for $E$.
\end{proof}

In particular, $E_*\in \sigma_{ess}(\tilde H_0)$ and $E_*\in \sigma_{ess}(\tilde H_\omega)$, by \eqref{formula-borel-cantelli}. The localization results in \cite[Section 1.2.3]{EKl13} at the bottom of the spectrum of $\tilde H_\omega$, which imply finite multiplicity of eigenvalues, together with Proposition \ref{prop} yields that $E_*$ is not an isolated spectral value of $\tilde H_\omega$. Note that in dimension $1$ this result is direct, since the essential spectrum does not contain degenerated eigenvalues.
%According to \cite{KlMol06}, in a region of dynamical localization the eigenvalues are simple.
\begin{rem}Note that property of strict uniform pattern frequency implies condition \cite[Eq. 1.13]{EKl13}.
\end{rem}

\section*{Acknowledgements}
The author would like to thank  I. Veseli\'c for providing references and M. Sabri for comments on a previous version of this article. The author is grateful to F. Germinet and A. Klein for stimulating discussions and helpful remarks. The hospitality and financial support of the Mittag-Leffler Institute (Djursholm, Sweden) are gratefully acknowledged.
%The author gratefully acknowledges the hospitality and financial support of the Mittag-Leffler Institute (Djursholm, Sweden).

\bibliographystyle{alpha}
\bibliography{refsconi_2013_01_26}

%
%
%\begin{thebibliography}{99}
%\frenchspacing
%
%\bibitem[Ober-th]{O97} Obermeit, J.
%, \emph{PhD Thesis}, 1997.
%
%\end{thebibliography}

\end{document}